\newtheorem{theorem}{Theorem}
\newtheorem*{theorem*}{Theorem}
\newtheorem{corollary}[theorem]{Corollary}
\newtheorem{lemma}[theorem]{Lemma}
\newcommand{\dist}{\mathit{dist}}
\newcommand{\next}{\mathit{Next}}
\newcommand{\len}{\mathit{len}}
\begin{document}

\title{Finding a smallest odd hole in a claw-free graph using global structure}

\author{W.~Sean Kennedy\thanks{School of Computer Science, McGill University, Montreal.} \and  Andrew D.\ King\thanks{Corresponding author: {\tt andrew.d.king@gmail.com}, IEOR Department, Columbia University, New York.  Research supported by an NSERC postdoctoral fellowship.}}

\date{}

\maketitle

\begin{abstract}
A lemma of Fouquet implies that a claw-free graph contains an induced $C_5$, contains no odd hole, or is quasi-line.  In this paper we use this result to give an improved shortest-odd-hole algorithm for claw-free graphs by exploiting the structural relationship between line graphs and quasi-line graphs suggested by Chudnovsky and Seymour's structure theorem for quasi-line graphs.  Our approach involves reducing the problem to that of finding a shortest odd cycle of length $\geq 5$ in a graph.  Our algorithm runs in $O(m^2+n^2\log n)$ time, improving upon Shrem, Stern, and Golumbic's recent $O(nm^2)$ algorithm, which uses a local approach.  The best known recognition algorithms for claw-free graphs run in $O(m^{1.69}) \cap O(n^{3.5})$ time, or $O(m^2) \cap O(n^{3.5})$ without fast matrix multiplication.
\end{abstract}


\section{Background and motivation}

A {\em hole} in a graph is an induced cycle $C_k$ of length $k \geq 4$.  Odd holes are fundamental to the study of perfect graphs \cite{chudnovskyrst06}; although there are polynomial-time algorithms that decide whether or not either a graph or its complement contains an odd hole \cite{cornuejolslv03, chudnovskyclsv05}, no general algorithm for detecting an odd hole in a graph is known.

Odd holes are also fundamental to the study of claw-free graphs, i.e.\ graphs containing no induced copy of $K_{1,3}$.  Every neighbourhood $v$ in a claw-free graph has stability number $\alpha(G[N(v)])\leq 2$.  So if $G[N(v)]$ is perfect then $v$ is bisimplicial (i.e.\ its neighbours can be partitioned into two cliques, i.e.\ $G[N(v)]$ is cobipartite), and if $G[N(v)]$ is imperfect then $G[N(v)]$ contains the complement of an odd hole.  Fouquet proved something stronger:

\begin{lemma}[Fouquet \cite{fouquet93}]\label{lem:fouquet}
Let $G$ be a connected claw-free graph with $\alpha(G)\geq 3$.  Then every vertex of $G$ is bisimplicial or contains an induced $C_5$ in its neighbourhood.
\end{lemma}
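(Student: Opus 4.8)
The plan is to reduce the statement to a clean question about complements of neighbourhoods and then to isolate a single obstruction that the global hypotheses kill. Fix a vertex $v$. Since $G$ is claw-free, three pairwise non-adjacent vertices of $N(v)$ would induce a $K_{1,3}$ together with $v$, so $\alpha(G[N(v)])\leq 2$; equivalently $\overline{G[N(v)]}$ is triangle-free. If $N(v)$ is a clique then $v$ is trivially bisimplicial, so the interesting case is when $N(v)$ contains a non-edge. Now I would record two translations under complementation: $v$ is bisimplicial exactly when $N(v)$ is a union of two cliques, i.e.\ when $\overline{G[N(v)]}$ is bipartite; and since $C_5$ is self-complementary, $N(v)$ contains an induced $C_5$ exactly when $\overline{G[N(v)]}$ does. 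So it suffices to show that the triangle-free graph $\overline{G[N(v)]}$ is bipartite or contains an induced $C_5$.

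Next I would dispatch the hypothesis-free graph theory. If $\overline{G[N(v)]}$ is non-bipartite it has a shortest odd cycle $C$, and a shortest odd cycle is chordless, since a chord would split $C$ into a strictly shorter odd cycle; hence $C$ is induced. As $\overline{G[N(v)]}$ is triangle-free, $C$ has length at least $5$, and if additionally $N(v)$ has no induced $C_5$ then $C$ has length at least $7$. Reading this back through the complement, the \emph{only} way the lemma can fail at $v$ is that $N(v)$ avoids induced $C_5$ yet contains an induced anti-hole $\overline{C_{2k+1}}$ with $2k+1\geq 7$. The entire problem is thereby reduced to excluding such a neighbourhood.

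It is worth stressing that this final exclusion genuinely needs the hypotheses $\alpha(G)\geq 3$ and connectivity: the graph $\overline{C_7}$ has independence number $2$, is not a union of two cliques, and has no induced $C_5$ (each of its five-vertex induced subgraphs is too dense to be a $C_5$), so the purely local statement is false and something global must intervene. The mechanism I would exploit is claw-freeness at the anti-hole vertices. Writing the anti-hole as $x_0,\dots,x_{2k}$, with $x_i\sim x_j$ in $G$ precisely when $i,j$ are non-consecutive modulo $2k+1$, the set $N(x_i)$ restricted to the anti-hole induces the complement of a path (a $P_4$ in the $\overline{C_7}$ case), and each anti-hole non-edge $\{x_i,x_{i+1}\}$ lies in the neighbourhood of several other anti-hole vertices. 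This rigidity is what blocks stable triples: any vertex completing a stable triple with such a non-edge cannot be adjacent to an anti-hole vertex seeing both endpoints, and chasing the forced non-adjacencies around the anti-hole should either isolate the offending vertex entirely or, for an extra neighbour of $v$, force an induced $C_5$ inside $N(v)$.

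The main obstacle is exactly this global attachment analysis: proving that a long anti-hole inside a neighbourhood is incompatible, in a connected claw-free graph, with $\alpha(G)\geq 3$. The anti-hole together with $v$ already has independence number $2$, so the content is to show that no stable triple of $G$ can survive — including one that sits far from $v$ and communicates with the anti-hole only along a connecting path, whose entry vertex into the anti-hole's neighbourhood is then pinned by the complement-of-a-path structure and violates claw-freeness. I expect this propagation, organised as a case analysis of how a vertex may attach to the anti-hole, to be the technical heart of the argument. Once the incompatibility is in hand, every neighbourhood with no induced $C_5$ has bipartite complement, so every such vertex is bisimplicial, which is the claim.
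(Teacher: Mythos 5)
The paper does not prove this lemma at all --- it is imported verbatim from Fouquet's 1993 paper as a black box --- so there is no in-paper argument to compare yours against; the question is only whether your attempt stands on its own. Your reduction is correct and cleanly stated: claw-freeness gives $\alpha(G[N(v)])\leq 2$, i.e.\ $\overline{G[N(v)]}$ is triangle-free; bisimpliciality of $v$ is bipartiteness of $\overline{G[N(v)]}$; $C_5$ is self-complementary; and a shortest odd cycle is induced, so the only possible failure at $v$ is a neighbourhood containing an induced $\overline{C_{2k+1}}$ with $2k+1\geq 7$ and no induced $C_5$. Your observation that $\overline{C_7}$ itself witnesses the failure of the purely local statement, so that connectivity and $\alpha(G)\geq 3$ must genuinely enter, is also correct and worth making.

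The gap is that the step you yourself identify as ``the technical heart'' is never carried out. Everything after ``The mechanism I would exploit'' is a plan, not a proof: you assert that chasing forced non-adjacencies around the antihole ``should'' either isolate a vertex or force an induced $C_5$ in $N(v)$, and you ``expect'' the propagation to work, but no case analysis of the possible attachments of an external vertex to $\{v\}\cup\{x_0,\dots,x_{2k}\}$ is given, no argument is made for why a stable triple lying far from $v$ and reaching the antihole only through a connecting path is killed, and no construction of the promised $C_5$ is exhibited. This is precisely the part of Fouquet's theorem that requires real work (the reduction you did perform is routine), so as written the proposal establishes only the easy equivalence and leaves the lemma's content unproved. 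To complete it you would need, at minimum: a classification of how a vertex $u\notin\tilde N(v)$ with a neighbour in the antihole can attach (using claw-freeness at each $x_i$, whose neighbourhood must be covered by two cliques containing prescribed antihole vertices), a proof that this forces enough adjacencies to propagate to all of $V(G)$ by connectivity, and a verification that the conclusion is $\alpha(G)\leq 2$ or an induced $C_5$ inside $N(v)$ itself (not merely somewhere in $G$), since that is what the statement demands.
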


It follows that a claw-free graph $G$ has $\alpha(G)\leq 2$, or contains an induced $C_5$ in the neighbourhood of some vertex, or is {\em quasi-line}, meaning every vertex is bisimplicial.

Chv\'atal and Sbihi proved a decomposition theorem for perfect claw-free graphs that yields a polynomial-time recognition algorithm \cite{chvatals88}.  More recently, Shrem, Stern, and Golumbic gave an $O(nm^2)$ algorithm for finding a shortest odd hole in a claw-free graph based on a variant of breadth-first search in an auxiliary graph \cite{shremsg10}.  We solve the same problem, but instead of using local structure we use global structure and take advantage of the similarities between claw-free graphs, quasi-line graphs, and line graphs.  We prove the following:

\begin{theorem}\label{thm:main}
There exists an algorithm that, given a claw-free graph $G$ on $n$ vertices and $m$ edges, finds a smallest odd hole in $G$ or determines that none exists in $O(m^2+n^2\log n)$ time.
\end{theorem}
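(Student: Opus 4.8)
The plan is to combine Fouquet's trichotomy (Lemma~\ref{lem:fouquet}) with the structural kinship between quasi-line graphs and line graphs, reducing the search for a smallest odd hole to a search for a shortest odd cycle of length at least $5$ in an auxiliary (multi)graph. First I would dispose of the two easy outcomes of the trichotomy. We may assume $G$ is connected, since otherwise we run the algorithm on each component and return the smallest hole found. If $\alpha(G)\le 2$, then the only odd hole that can occur as an induced subgraph is $C_5$: every longer odd hole $C_{2k+1}$ has independence number $k\ge 3$. Since $C_5$ is self-complementary and $\overline{G}$ is triangle-free when $\alpha(G)\le 2$, detecting an induced $C_5$ amounts to finding a $5$-cycle in the triangle-free graph $\overline{G}$, where such a cycle is automatically chordless. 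More generally, I would first test whether $G$ contains any induced $C_5$; if it does, then since no odd hole is shorter than $C_5$, that copy is a smallest odd hole and we are done. This single test simultaneously clears the $\alpha(G)\le 2$ case and the ``$C_5$ in a neighbourhood'' case of Lemma~\ref{lem:fouquet}.

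After this test fails we know $G$ has no induced $C_5$, so in particular no vertex has a $C_5$ in its neighbourhood. By Lemma~\ref{lem:fouquet}, every component with $\alpha\ge 3$ is then quasi-line, while every component with $\alpha\le 2$ contains no odd hole at all (a $C_5$ has been excluded and any longer odd hole has independence number at least $3$). It remains to find a smallest odd hole in a quasi-line graph with no induced $C_5$. Here I would invoke Chudnovsky and Seymour's structure theorem: a connected quasi-line graph is either a fuzzy circular interval graph or a composition of fuzzy linear interval strips, the latter being a direct generalization of the line graph of a multigraph in which vertices play the role of edges and the maximal cliques play the role of vertices. The heart of the argument is to make this analogy quantitative for odd holes. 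For a genuine line graph $L(H)$ the correspondence is exact and clean: the edges of any cycle of length $k\ge 4$ in $H$ induce a $C_k$ in $L(H)$, because two edges of a cycle are adjacent in $L(H)$ precisely when they are consecutive; conversely an induced $C_k$ in $L(H)$ traces out a genuine $k$-cycle in $H$. Hence a smallest odd hole in $L(H)$ has the same length as a shortest odd cycle of length at least $5$ in $H$, and $H$ is recoverable from $L(H)$ in linear time by line-graph recognition.

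I would then extend this correspondence to the fuzzy composition, which is where the real work lies. The plan is to show that a smallest odd hole of a quasi-line graph $G$ either is confined to a single fuzzy interval gadget---in which case it is short and can be read directly off the circular-arc or linear-interval representation by a sweep around the corresponding order---or it threads through several strips, in which case it projects to an odd cycle of length at least $5$ in a root multigraph $H$, and every such odd cycle lifts back to an odd hole in $G$ of the corresponding length. The required lemma is that the fuzzy modifications and the homogeneous pairs of cliques neither introduce a spurious shorter odd hole nor destroy a genuine one, so that the length of a smallest odd hole in $G$ equals the length of a shortest odd cycle of length at least $5$ in $H$ (or the length found inside a single gadget, whichever is smaller). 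Once $H$ is in hand, I would compute a shortest odd cycle of length at least $5$ by a parity-tracking shortest-path computation in the bipartite double cover of $H$, explicitly excluding the length-$3$ cycles so that only holes are counted, and then lift the optimal cycle to an actual odd hole of $G$.

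Finally I would account for the running time. The dense and $C_5$-detection steps are bounded by $O(m^2)$; the structural decomposition of the quasi-line components, the circular-arc sweeps, and the shortest-odd-cycle computations are bounded by $O(n^2\log n)$, giving the claimed $O(m^2+n^2\log n)$ total. I expect the main obstacle to be the second and third steps: proving the precise numerical correspondence between smallest odd holes in the quasi-line graph and shortest odd cycles of length at least $5$ in the root multigraph while controlling the fuzzy circular and linear interval pieces and the homogeneous pairs of cliques. This is exactly where quasi-line graphs depart from line graphs, and verifying that no such gadget creates a short odd hole that the reduction would miss, and that no odd hole is lengthened or lost in the projection, is the technical core on which the correctness of the whole reduction rests.
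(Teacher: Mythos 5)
Your high-level plan (Fouquet's trichotomy, then the Chudnovsky--Seymour structure theorem, then a reduction to a shortest-odd-cycle problem in a root multigraph) is the same as the paper's, but the quantitative heart of your reduction is wrong as stated, not merely unproved. When a hole threads through several strips its length is the \emph{sum of the span lengths} of the strips it traverses, not the number of strips traversed, and its parity is the parity of that sum --- which can moreover be toggled \emph{inside a single strip} whenever that strip admits spans of two consecutive lengths. So the asserted identity ``length of a smallest odd hole in $G$ $=$ length of a shortest odd cycle of length $\ge 5$ in $H$'' fails, and a parity-tracking shortest path in the bipartite double cover of the \emph{unweighted} root multigraph computes the wrong quantity. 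The paper instead assigns each edge $e$ of $H$ the weight $\ell_e$ of a shortest span, computes both a shortest and a near-shortest span for every strip, and splits into the case where the hole meets a strip admitting a near-shortest span (a minimum-weight cycle through that edge found by Dijkstra, with parity corrected inside that strip) and the case where it does not (where all spans of each remaining strip have equal length, so the residual graph collapses to a genuine line graph). Even in that last, purely line-graph case your double-cover idea does not suffice: a shortest odd closed walk of length $\ge 5$ need not be a cycle (a triangle plus a doubled edge is an odd closed walk of length $5$), so ``explicitly excluding the length-$3$ cycles'' is precisely the hard part. The paper handles it with Monien's $C_5$ algorithm followed by Lemma~\ref{lem:cycles}, which shows that in a $C_5$-free graph a shortest odd cycle of length $\ge 7$ decomposes as an edge plus two shortest paths, enabling an all-pairs-shortest-path search.

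There are two further gaps. First, your opening step presupposes an $O(m^2)$ test for an arbitrary induced $C_5$ in a claw-free graph, but you only justify such a test when $\alpha(G)\le 2$. The paper never needs a global induced-$C_5$ detector: it searches for a $C_5$ only inside a vertex neighbourhood (where $\alpha\le 2$ holds automatically by claw-freeness), which suffices to certify quasi-lineness via Lemma~\ref{lem:fouquet}; a $C_5$ not dominated by any vertex is found later by the quasi-line machinery. Your claim that ``after this test fails we know $G$ has no induced $C_5$'' therefore rests on an unsupported subroutine. Second, you correctly identify that the homogeneous pairs of cliques (the ``fuzziness'') must neither create nor destroy a shortest odd hole, but you supply no mechanism; the paper resolves this \emph{before} invoking the structure theorem by passing to the Chudnovsky--King optimal antithickening and applying the Chv\'atal--Sbihi Homogeneous Pair Lemma to obtain an induced subgraph with no nonlinear homogeneous pair of cliques that still contains a shortest odd hole of $G$. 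Without the weighted-cycle correspondence, the length-$\ge 5$ restriction, and the homogeneous-pair reduction, the proposal remains a plan rather than a proof.
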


Fouquet's lemma allows us to focus on quasi-line graphs.  Their global structure, described by Chudnovsky and Seymour \cite{cssurvey}, resembles that of line graphs closely enough that we can reduce the shortest odd hole problem on quasi-line graphs to a set of shortest path problems in underlying multigraphs.  Our algorithm is not much slower than the fastest known recognition algorithms for claw-free graphs:  Alon and Boppana gave an $O(n^{3.5})$ recognition algorithm \cite{alonb87}.  Kloks, Kratsch, and M\"uller gave an $O(m^{1.69})$ recognition algorithm that relies on impractical fast matrix multiplication \cite{klokskm00}.  Their approach takes $O(m^2)$ time using na\"{i}ve matrix multiplication, and more generally $O(m^{(\beta+1)/2})$ time using $O(n^\beta)$ matrix multiplication.

\section{The easy cases: Finding a $C_5$}

We begin by taking advantage of Fouquet's lemma in order to reduce the problem to quasi-line graphs.  We denote the closed neighbourhood of a vertex $v$ by $\tilde N(v)$.

\begin{theorem}\label{thm:alpha2}
Let $G$ be graph with $\alpha(G)\leq 2$.  In $O(m^2)$ time we can find an induced $C_5$ in $G$ or determine that none exists.
\end{theorem}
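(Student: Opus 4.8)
The plan is to pass to the complement. Since $\alpha(G)\le 2$ says that $G$ has no stable set of size three, the complement $\bar G$ is triangle-free. Moreover $C_5$ is self-complementary, so five vertices induce a $C_5$ in $G$ if and only if they induce a $C_5$ in $\bar G$; and because $\bar G$ is triangle-free, \emph{any} five-cycle in $\bar G$ is automatically chordless, hence induced. Thus the task reduces exactly to deciding whether the triangle-free graph $\bar G$ contains a cycle of length five and, if so, exhibiting one. We can form $\bar G$ in $O(n^2)$ time.

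The key quantitative observation is that the triangle-free hypothesis forces $G$ to be dense. By Mantel's theorem $\bar G$ has at most $n^2/4$ edges, so $m \ge \binom{n}{2} - n^2/4 = \Theta(n^2)$; equivalently $n^2 = O(m)$ and hence $n^4 = O(m^2)$. Consequently an algorithm running in $O(n^4)$ time already meets the $O(m^2)$ budget. (Note, though, that naive enumeration of all $\binom{n}{5}$ vertex quintuples is $\Theta(n^5)$ and therefore too slow, so a little care is needed.)

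To find a five-cycle in $\bar G$ I would first precompute, for every ordered pair $(x,y)$ of vertices, whether $x$ and $y$ have a common neighbour in $\bar G$; this is just testing positivity of the entries of the square of the adjacency matrix, at cost $O(n^3)=O(m^{3/2})$. I would then iterate over every edge $bc$ of $\bar G$ (to play the role of one cycle edge) and over every choice of $a$ in the neighbourhood $N(b)$ and $d\in N(c)$, and accept the quintuple $\{a,b,c,d,e\}$ whenever $a,b,c,d$ are distinct and $a,d$ have some common neighbour $e$. There are $O(n^4)$ such triples $(bc,a,d)$ and each test is $O(1)$ given the precomputed table, so this stage also runs in $O(n^4)=O(m^2)$ time.

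The reason this certifies an induced $C_5$ is exactly the triangle-free structure of $\bar G$: writing the candidate in cyclic order $e,a,b,c,d$, every potential chord is ruled out by a triangle. For instance $ac$ would create the triangle $abc$, $bd$ the triangle $bcd$, and $ad$ together with the common neighbour $e$ the triangle $ade$; likewise $be$ forces $abe$ and $ce$ forces $cde$, and the apex $e$ is forced distinct from $a,b,c,d$. So the verification is essentially automatic once the reduction is in place. The only real obstacle is the efficiency accounting --- keeping the search below the naive $\Theta(n^5)$ --- which the density bound $n^2=O(m)$ and the $O(1)$ common-neighbour lookups resolve; the combinatorics of the reduction itself is straightforward.
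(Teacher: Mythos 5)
Your proof is correct, but it takes a genuinely different route from the paper's. The paper works directly in $G$: for each candidate edge $uv$ of the hole it forms $X=N(u)\setminus \tilde N(v)$, $Y=N(v)\setminus \tilde N(u)$, $Z=V(G)\setminus(N(u)\cup N(v))$, observes that $\alpha(G)\le 2$ forces $X$ and $Y$ to be complete to $Z$, and then only has to find a single non-edge between $X$ and $Y$ in $O(m)$ time per edge, giving a $C_5$ of the form $uvyzx$. You instead pass to the complement, note that $\bar G$ is triangle-free so that every $5$-cycle in $\bar G$ is automatically chordless (and $C_5$ is self-complementary), and reduce to $5$-cycle detection in a triangle-free graph; the budget is met because Mantel's theorem gives $n^2=O(m)$, so your $O(n^3)$ common-neighbour table and $O(n^4)$ enumeration of triples $(bc,a,d)$ both fit inside $O(m^2)$. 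Your chord-elimination and the forced distinctness of the apex $e$ from $b$ and $c$ are all correctly justified by triangle-freeness, and the density bound is the right way to charge $n^4$ against $m^2$. What your approach buys is a clean reduction to a classical cycle-detection problem with an essentially automatic induced-ness certificate; what the paper's approach buys is that it never forms the complement, runs edge-by-edge in $G$ (so it naturally finds a $C_5$ through a prescribed edge, which is reused later in Corollary \ref{cor:cig}), and avoids any matrix computation. Both are valid $O(m^2)$ arguments.
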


\begin{proof}
For each edge $uv$ we do the following.  First, we construct sets $X= N(u)\setminus \tilde N(v)$, $Y=N(v)\setminus \tilde N(u)$, and $Z=V(G)\setminus (N(u)\cup N(v))$.  If $u$ and $v$ are in an induced $C_5$ together then all three must be nonempty.  Since $\alpha(G)\leq 2$, we know that both $X$ and $Y$ are complete to $Z$.  Second, we search for $x\in X$ and $y\in Y$ which are nonadjacent -- if such $x$ and $y$ exist then this clearly gives us a $C_5$.  It is easy to see that we can construct the sets in $O(n)$ time, and that we can search for a non-edge between $X$ and $Y$ in $O(m)$ time, since we can terminate once we find one.  Thus it takes $O(m^2)$ time to do this for every edge, and if an induced $C_5$ exists in $G$ we will identify it as $uvyzx$ for any $z\in Z$.
\end{proof}

Kloks, Kratsch, and M\"uller observed that as a consequence of Tur\'an's theorem, every vertex in a claw-free graph has at most $2\sqrt{m}$ neighbours \cite{klokskm00}.  We make repeated use of this fact, starting with a consequence of the previous lemma:

\begin{corollary}
Let $G$ be a claw-free graph with $\alpha(G)\geq 3$.  Then in $O(m^2)$ time we can find an induced $W_5$ in $G$ or determine that $G$ is quasi-line.
\end{corollary}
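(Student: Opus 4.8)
The plan is to combine Fouquet's lemma (Lemma~\ref{lem:fouquet}) with Theorem~\ref{thm:alpha2}, organised so that the expensive $C_5$-finding routine is invoked at most once. We may assume $G$ is connected, so Lemma~\ref{lem:fouquet} applies. Since a graph is quasi-line exactly when every vertex is bisimplicial, and a vertex $v$ is bisimplicial precisely when $N(v)$ is cobipartite, i.e.\ when $\overline{G[N(v)]}$ is bipartite, the lemma gives a clean dichotomy: every vertex either passes this bipartiteness test or, by Lemma~\ref{lem:fouquet}, is the hub of an induced $W_5$. These outcomes are genuinely exclusive for a single $v$, since an induced $C_5$ in $G[N(v)]$ would give an induced $\overline{C_5}=C_5$ in $\overline{G[N(v)]}$ ($C_5$ being self-complementary), an odd cycle no bipartite graph contains. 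So the whole problem reduces to testing bisimpliciality of every vertex and, when one fails, extracting the $C_5$ that Lemma~\ref{lem:fouquet} promises.

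Concretely I would first scan every vertex $v$ with the cheap test: form $\overline{G[N(v)]}$ and check bipartiteness by $2$-colouring. If all vertices pass, report that $G$ is quasi-line; no $C_5$-search is ever run. Otherwise fix a failing vertex $v$, apply Theorem~\ref{thm:alpha2} to $G[N(v)]$ (legitimate since $\alpha(G[N(v)])\le 2$ in a claw-free graph) to obtain an induced $C_5 = v_1v_2v_3v_4v_5$, and output $\{v\}\cup\{v_1,\ldots,v_5\}$. Because $v$ is adjacent to all of $N(v)$ and the $v_i$ induce a chordless cycle, these six vertices induce a $W_5$. Lemma~\ref{lem:fouquet} guarantees the $C_5$ sought by Theorem~\ref{thm:alpha2} really exists, so the call always succeeds.

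The hard part is the running time. Running Theorem~\ref{thm:alpha2} on \emph{every} neighbourhood is too slow: for $G[N(v)]$ it costs $O(m_v^2)$ with $m_v = |E(G[N(v)])|$, and since $\alpha(G[N(v)])\le 2$ forces $m_v$ as large as $\Theta(\deg(v)^2)$ while $\deg(v)\le 2\sqrt m$, the naive sum $\sum_v m_v^2$ is only $O(m^{2.5})$. Decoupling detection from extraction is what saves us. The bipartiteness test on $\overline{G[N(v)]}$ runs in $O(\deg(v)^2)$ time, so the scan over all vertices costs $\sum_v O(\deg(v)^2) \le 2\sqrt m \sum_v \deg(v) = O(m^{1.5})$. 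The single invocation of Theorem~\ref{thm:alpha2} costs $O(m_v^2)$, and since $m_v \le \binom{\deg(v)}{2} < 2m$ this is $O(m^2)$. The total is $O(m^2 + m^{1.5}) = O(m^2)$, and the crux of the argument is precisely to ensure the quadratic step happens exactly once, after the inexpensive scan has already located a witness.
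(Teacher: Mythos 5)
Your proposal is correct and follows essentially the same route as the paper: test every vertex for bisimpliciality with a cheap cobipartiteness check, then invoke the Theorem~\ref{thm:alpha2} routine exactly once on the neighbourhood of a single failing vertex. Your bound of $O(m^{1.5})$ for the scanning phase is a slight sharpening of the paper's $O(nm)$, but the argument is otherwise the same.
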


\begin{proof}
By Fouquet's lemma, any vertex of $G$ is either bisimplicial or contains an induced $C_5$ in its neighbourhood.  For any $v\in V(G)$, we can easily check whether or not $G[N(v)]$ is cobipartite in $O(d(v)^2)$ time.  Since $G$ is claw-free, $d(v)^2 = O(m)$.  Thus in $O(nm)$ time we can determine that $G$ is quasi-line or find a vertex $v$ which is not bisimplicial.

Given this $v$, we can find an induced $C_5$ in $G[N(v)]$ in $O(m^2)$ time by applying the method in the previous proof, since $\alpha(G[N(v)]) \leq 2$.
\end{proof}

Having dealt with these cases made easy by Fouquet's lemma, we can move on to quasi-line graphs with $\alpha\geq 3$, the structure of which we describe now.

\section{The structure of quasi-line graphs}

Given a multigraph $H$ (with loops permitted), its {\em line graph} $L(H)$ is the graph with one vertex for each edge of $H$, in which two vertices are adjacent precisely if their corresponding edges in $H$ share at least one endpoint.  Thus the neighbours of any vertex $v$ in $L(H)$ are covered by two cliques, one for each endpoint of the edge in $H$ corresponding to $v$.  We say that $G$ is a line graph if $G=L(H)$ for some multigraph $H$.

Chudnovsky and Seymour \cite{cssurvey} described exactly how quasi-line graphs generalize line graphs:  a quasi-line graph is essentially either a circular interval graph or can be obtained from a multigraph by replacing each edge with a linear interval graph.

\subsection{Linear and circular interval graphs}

A {\em linear interval graph} is a graph $G=(V,E)$ with a {\em linear interval representation}, which is a point on the real line for each vertex and a set of intervals, such that vertices $u$ and $v$ are adjacent in $G$ precisely if there is an interval containing both corresponding points on the real line.  If $X$ and $Y$ are specified cliques in $G$ consisting of the $|X|$ leftmost and $|Y|$ rightmost vertices (with respect to the real line) of $G$ respectively, we say that $X$ and $Y$ are {\em end-cliques} of $G$.  Given a linear interval representation, if $u$ is to the left of $v$ we say that $u<v$, $u$ is a {\em left neighbour} of $v$, and $v$ is a {\em right neighbour} of $u$.

Accordingly, a {\em circular interval graph} is a graph with a {\em circular interval representation}, i.e.\ $|V|$ points on the unit circle and a set of intervals (arcs) on the unit circle such that two vertices of $G$ are adjacent precisely if some arc contains both corresponding points.  Circular interval graphs are the first of two fundamental types of quasi-line graph.  Deng, Hell, and Huang proved that we can identify and find a representation of a circular or linear interval graph in $O(m)$ time \cite{denghh96}.  We define {\em clockwise neighbours} and {\em counterclockwise neighbours} analogously to left neighbours and right neighbours in linear interval graphs.

\subsection{Compositions of linear interval strips}

We now describe the second fundamental type of quasi-line graph.

A {\em linear interval strip} $(S,X,Y)$ is a linear interval graph $S$ with specified end-cliques $X$ and $Y$.  We compose a set of strips as follows.  We begin with an underlying directed multigraph $H$, and for every every edge $e$ of $H$ we take a linear interval strip $(S_e, X_e, Y_e)$.  For $v\in V(H)$ we define the {\em hub clique} $C_v$ as
$$C_v = \left( \bigcup\{X_e \mid e \textrm{ is an edge out of } v \}\right) \cup \left( \bigcup \{Y_e \mid e \textrm{ is an edge into } v \}\right).$$
We construct $G$ from the disjoint union of $\{ S_e \mid e \in E(H)\}$ by making each $C_v$ a clique; $G$ is then a {\em composition of linear interval strips} (see Figure \ref{fig:strip}).  Let $G_h$ denote the subgraph of $G$ induced on the union of all hub cliques.  That is,
$$G_h = G[\cup_{v\in V(H)} C_v] = G[\cup_{e\in E(H)} (X_e\cup Y_e)].$$

\begin{figure}
\begin{center}
\includegraphics[width=0.7\textwidth]{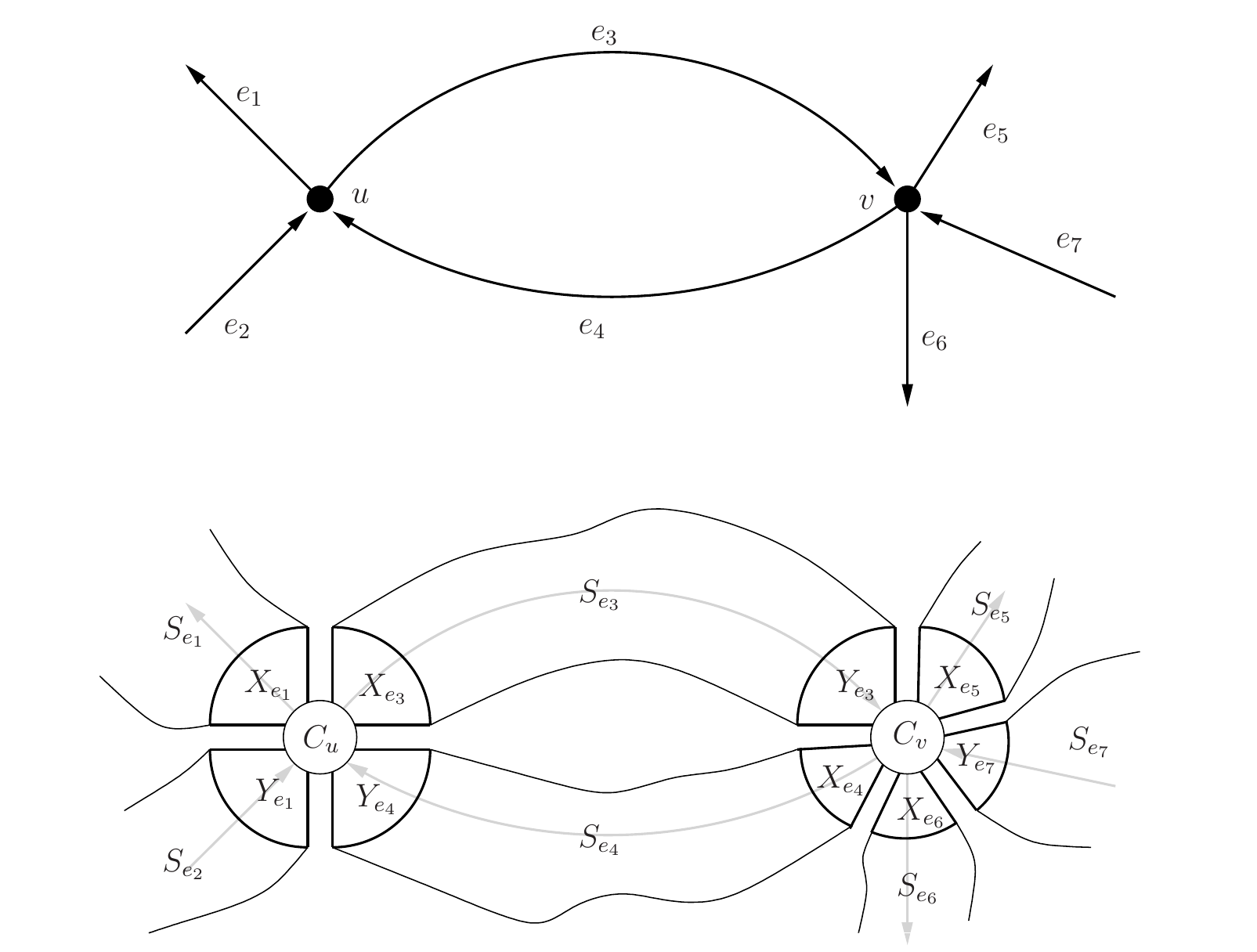}
\end{center}
\caption[A composition of strips]{We compose a set of strips $\{(S_e,X_e,Y_e) \mid e\in E(H)\}$ by joining them together on their end-cliques.  A hub clique $C_u$ will arise for each vertex $u \in V(H)$.}
\label{fig:strip}
\end{figure}

Compositions of linear interval strips generalize line graphs: note that if each $S_e$ satisfies $|S_e|=|X_e|=|Y_e|=1$ then $G = G_h = L(H)$.  

\subsection{Homogeneous pairs of cliques}

A pair of disjoint nonempty cliques $(A,B)$ is a {\em homogeneous pair of cliques} if $|A\cup B|\geq 3$, and every vertex outside $A\cup B$ sees either all or none of $A$ and either all or none of $B$.  These are a special case of {\em homogeneous pairs}, which were introduced by Chv\'atal and Sbihi in the study of perfect graphs \cite{chvatals87}.  It is not hard to show that $G[A\cup B]$ contains an induced copy of $C_4$ precisely if $A\cup B$ does not induce a linear interval graph; in this case we say that $(A,B)$ is a {\em nonlinear} homogeneous pair of cliques\footnote{These were originally called {\em nontrivial} homogeneous pairs of cliques by Chudnovsky and Seymour, who used them in their description of quasi-line graphs \cite{cssurvey}.  We prefer the more descriptive term {\em nonlinear} in part because they are less trivial than {\em skeletal} homogeneous pairs of cliques, which are useful in the study of general claw-free graphs (see \cite{kingthesis}, Chapter 6).}.

\subsection{The structure theorem}

Chudnovsky and Seymour's structure theorem for quasi-line graphs \cite{cssurvey} tells us that all quasi-line graphs are made from the building blocks we just described.

\begin{theorem}
Any quasi-line graph containing no nonlinear homogeneous pair of cliques is either a circular interval graph or a composition of linear interval strips.
\end{theorem}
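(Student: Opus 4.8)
The plan is to argue by induction on $|V(G)|$, peeling off one linear interval strip at a time until what remains is recognizably a circular interval graph (or a degenerate base case). Throughout we may assume $G$ is connected, since the decompositions of the components combine trivially, and we may concentrate on the case $\alpha(G) \geq 3$; the small-stability case $\alpha(G) \leq 2$ is degenerate and can be exhibited directly as one of the two advertised types. The goal of each inductive step is to either produce a circular interval representation outright or locate a piece of $G$ that can legitimately be called a strip.

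The central idea is to exploit the defining local structure. Since $G$ is quasi-line, the neighbourhood of every vertex is covered by two cliques, so locally each vertex carries a left/right split. First I would make this global by studying the maximal cliques of $G$ and how they overlap, orienting them consistently. The key dichotomy is whether these local orientations wind up into a single cyclic order on $V(G)$ -- in which case the covering arcs yield a circular interval representation and we are done -- or whether there is a \emph{hub clique} $C$ across which $G$ separates into pieces each attaching to $C$ only through an end-clique. Identifying such a hub, and an associated piece $S$ with its attachment cliques $X$ and $Y$, lets me designate $(S,X,Y)$ as a linear interval strip, record it as a single edge of an underlying multigraph $H$, and apply the inductive hypothesis to the smaller quasi-line graph obtained by replacing $S$ with that edge. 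Verifying that this smaller graph is again quasi-line and again free of nonlinear homogeneous pairs of cliques is routine but must be checked so that the induction is legitimate.

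The hard part will be twofold. First, establishing the dichotomy cleanly: I must show that if no cyclic order closes up then a genuine hub clique with strip-attachment structure always exists, which requires ruling out the intermediate configurations and is where the bulk of the case analysis lives. Second, and most delicately, I must prove that each peeled piece $S$ is in fact a \emph{linear} interval graph with $X$ and $Y$ as end-cliques, rather than some more general quasi-line fragment. This is exactly where the hypothesis enters: any obstruction to linearity inside a strip is an induced $C_4$ positioned so that the two cliques realizing it are seen uniformly by every vertex outside, i.e.\ they form a nonlinear homogeneous pair of cliques, which is forbidden. Converting each potential $C_4$ obstruction into such a forbidden pair, and confirming that $X$ and $Y$ genuinely occupy the extreme positions of the interval order, is the crux; it is what lets us avoid the ``fuzzy'' strips that would otherwise be unavoidable and keeps circular interval graphs and honest compositions of linear interval strips sufficient as building blocks.
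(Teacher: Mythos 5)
This theorem is not proved in the paper at all: it is Chudnovsky and Seymour's structure theorem for quasi-line graphs, imported verbatim from their survey, so there is no in-paper argument to compare yours against. Judged on its own terms, your proposal is a plan rather than a proof. The two steps you flag as ``the hard part'' --- (i) showing that when the local left/right splits do not close up into a single cyclic order there must exist a hub clique across which the graph separates into strip-like pieces, and (ii) showing that each peeled piece is a genuine linear interval graph with $X$ and $Y$ as end-cliques --- are not side issues to be filled in later; together they \emph{are} the theorem, and in Chudnovsky and Seymour's treatment they occupy a long case analysis spread over several papers. Asserting the dichotomy ``either the orientations wind up into a cyclic order or there is a hub clique'' without ruling out the intermediate configurations is exactly the gap.

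Two further concrete problems. First, you dismiss $\alpha(G)\leq 2$ as degenerate, but quasi-line graphs with $\alpha\leq 2$ include all cobipartite graphs, and showing that those without a nonlinear homogeneous pair of cliques are circular interval graphs (or compositions of strips) genuinely requires work; it is not an immediate observation. Second, your inductive step replaces a strip $S$ by a single edge of $H$ and claims the resulting graph is again quasi-line with no nonlinear homogeneous pair of cliques; the quasi-line part is plausible, but contracting a strip can \emph{create} a nonlinear homogeneous pair of cliques in the smaller graph (two parallel strips between the same pair of hubs become a pair of cliques seen uniformly from outside), so the hypothesis is not obviously preserved and the induction as stated is not legitimate without an additional argument. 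If your goal is to use this theorem, the honest course is to cite it, as the paper does.
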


\section{A proof sketch}

Our approach follows the structure theorem in a straightforward way.  First, we eliminate nonlinear homogeneous pairs of cliques:  Chudnovsky and King give an $O(m^2)$ method for finding an {\em optimal antithickening} of a quasi-line graph \cite{chudnovskyk11}, which leads us to an induced subgraph of $G$ containing no nonlinear homogeneous pair of cliques, and containing a shortest odd hole of $G$.  We explain this in Section \ref{sec:homo}.  Thus we reduce the problem to the cases in which $G$ is a circular interval graph or a composition of linear interval strips.  

We deal with both cases using the same idea.  Given a linear interval strip $(S,X,Y)$, we define a {\em span} of the strip as an induced path with exactly one vertex in each of $X$ and $Y$ (vertices in $X$ and $Y$ must be the endpoints of the path).  To account for parity, for each strip we seek both a shortest span and a {\em near-shortest} span, whose length is greater by one.  Note that there may be no near-shortest span, for example if $S$ is a path.  In Section \ref{sec:strips} we show how to find these paths in $O(m)$ time for a linear interval strip, and explain the simple matter of how to use these paths to find a shortest odd hole in a circular interval graph in $O(m^2)$ time.

To deal with a composition of linear interval strips, we first decompose it using an $O(nm)$ algorithm of Chudnovsky and King \cite{chudnovskyk11}.  This gives us a multigraph $H$ along with linear interval strips $\{(S_e,X_e,Y_e \mid e\in E(H)\}$ such that for each $e$, we have $X_e=Y_e$ or $X_e\cap Y_e=\emptyset$.  We define the {\em span length} $\ell_e$ of a strip $S_e$ as the length of a shortest span, and we let $E_+(H)$ denote the set of edges $e$ of $H$ such that $(S_e,X_e,Y_e)$ contains a near-shortest span (we can determine $E_+(H)$ and find all desired spans in total time $O(nm)$).  The decomposition algorithm in \cite{chudnovskyk11} guarantees that $\ell_e \geq 2$ for $e\in E_+(H)$.

To find a shortest odd hole intersecting $\cup \{V(S_e) \mid e\in E_+(H)\}$, we first assign each edge $e\in E(H)$ weight $\ell_e$.  For each $e\in E_+(H)$, we search for a minimum weight cycle of weight at least four in $H$ containing $e$ by removing $e$ (and all parallel edges of span length 1, if $\ell_e =2$) and using Dijkstra's algorithm to find a shortest path between its endpoints, at a cost of $O(|E(H)|+|V(H)|\log|V(H)|)=O(n\log n)$ per edge $e$.  Since $e\in E_+(H)$, this gives us a shortest odd hole passing through $S_e$: we may change the parity of the hole at a cost of one extra vertex.  As $|E_+(H)|=O(n)$, this step takes $O(n^2\log n)$ time.

We then search for an odd hole not passing through $\cup\{V(S_e) \mid e\in E_+(H)\}$, first discarding every vertex except a shortest span of each strip $S_e$ with $e\notin E_+(H)$.  This leaves a subgraph $G'$, which is actually the line graph of a multigraph $H'$, which we can find in $O(m)$ time.  It remains to find a shortest odd cycle of length $\geq 5$ in $H'$; this will correspond to a shortest odd hole in $G$ not intersecting $\cup\{V(S_e) \mid e\in E_+(H)\}$.  To find such a cycle we first search for a $C_5$ in $O(|V(H)|\cdot|E(H)|)$ time, then search for a longer odd cycle by exploiting restrictions on the chords of a cycle if no $C_5$ exists.  This step takes $O(|V(H)|\cdot|E(H)|+|V(H)|^2\cdot\log |V(H)|) = O(n^2\log n)$ time.  Thus the total running time of the algorithm is $O(m^2+n^2 \log n)$.

\section{Dealing with homogeneous pairs of cliques}\label{sec:homo}

\begin{theorem}
Let $G$ be a quasi-line graph.  Then in $O(m^2)$ time, we can find an induced subgraph $G'$ such that
\begin{itemize}
\item $G'$ is quasi-line and contains no nonlinear homogeneous pair of cliques, and
\item if $G$ contains an odd hole of length $k$, then so does $G'$.
\end{itemize}
\end{theorem}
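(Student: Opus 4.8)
The plan is to use the optimal antithickening machinery of Chudnovsky and King to systematically replace each nonlinear homogeneous pair of cliques by a smaller ``linear'' gadget, while certifying that this operation preserves the existence of an odd hole of each length. First I would recall the key fact about a homogeneous pair of cliques $(A,B)$: since every vertex outside $A \cup B$ sees either all or none of $A$ and all or none of $B$, the external adjacencies of any vertex in $A \cup B$ depend only on which of $A,B$ it lies in. Thus in any induced cycle, a maximal subpath using vertices of $A \cup B$ can be rerouted freely, subject only to the internal adjacency structure of $G[A \cup B]$ and the parity/length of the subpath. The goal is to replace $G[A \cup B]$ by a linear interval graph $G[A' \cup B']$ on the same (or fewer) external attachment classes, chosen so that the set of achievable ``traversal lengths'' through the gadget is preserved in the relevant range.

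The key steps, in order, would be: (1) invoke the $O(m^2)$ antithickening algorithm of \cite{chudnovskyk11} to locate a nonlinear homogeneous pair of cliques $(A,B)$, i.e.\ one for which $G[A \cup B]$ is not a linear interval graph (equivalently contains an induced $C_4$); (2) analyze how an odd hole $C$ can interact with $A \cup B$, observing that $C$ enters and leaves $A \cup B$ at most a bounded number of times because $A$ and $B$ are cliques, so $C \cap (A \cup B)$ consists of a few subpaths, and in fact one can argue that a \emph{shortest} odd hole meets $A\cup B$ in at most one subpath; (3) show that this subpath can be replaced by a subpath of the same length through a canonical linear interval gadget on $A \cup B$ having the same external neighbourhood pattern, so that an odd hole of length $k$ survives the replacement and no shorter odd hole is created; (4) replace $G[A\cup B]$ by this gadget to obtain a quasi-line graph with strictly fewer vertices inside nonlinear homogeneous pairs, and iterate until none remain, bundling the cost into the stated $O(m^2)$ bound since the antithickening algorithm already runs within that budget.

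The main obstacle I expect is step (3): proving that replacing the nonlinear gadget by a linear one neither destroys an odd hole of length exactly $k$ nor manufactures a shorter odd hole. The delicate point is parity control — a traversal of $A \cup B$ contributes some number of vertices to the hole, and to preserve odd holes of \emph{every} length $k$ one must ensure the replacement gadget admits traversals of matching length and parity between the two attachment classes (entering via $A$, leaving via $B$, or entering and leaving via the same clique). I would handle this by characterizing, for the homogeneous pair, exactly which pairs $(\text{entry class},\,\text{exit class},\,\text{internal length})$ are realizable by an induced path, and then constructing the linear gadget to realize precisely the same realizable triples in the ranges that matter; here the fact that $G[A\cup B]$ is quasi-line (hence $\alpha(G[A\cup B]) \le 2$) sharply restricts the possible induced paths and makes the case analysis finite. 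The second subtlety, ensuring no \emph{shorter} odd hole is introduced, follows because the linear gadget can only reduce the set of induced subpaths relative to the chordal-poor original; I would make this precise by showing the realizable traversal lengths through the gadget are a subset of those through the original in the short range, so any new short hole would already have existed.
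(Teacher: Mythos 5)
Your overall plan (antithicken, analyze how a hole meets $A\cup B$, replace by a pattern preserving hole lengths) points in the right direction, but the step you yourself flag as the main obstacle --- building a linear gadget that realizes the same set of traversal lengths and parities --- is both where the argument is incomplete and where it is much harder than it needs to be. The decisive tool you are missing is the Homogeneous Pair Lemma of Chv\'atal and Sbihi: no minimal imperfect graph, hence no odd hole, contains a homogeneous pair, and from this one deduces that an odd hole of $G$ contains \emph{at most one vertex of $A_i$ and at most one vertex of $B_i$}. This kills the entire traversal-length/parity analysis: the only data a hole extracts from $(A_i,B_i)$ is whether its (at most one) vertex in $A_i$ is adjacent to its (at most one) vertex in $B_i$. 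Since the pair is nonlinear, $G[A_i\cup B_i]$ contains an induced $C_4$, which supplies two adjacent vertices $a_i',a_i''\in A_i$ and a vertex $b_i\in B_i$ with $a_i'b_i$ an edge and $a_i''b_i$ a non-edge; keeping just these three vertices realizes both adjacency patterns, so every odd hole can be rerouted through them with its length unchanged. Crucially, this three-vertex configuration is an induced subgraph of $G$, which your abstractly constructed ``canonical linear interval gadget'' need not be --- and the theorem demands that $G'$ be an induced subgraph of $G$. Your claim in step (2) that a shortest odd hole meets $A\cup B$ in at most one subpath is also false as stated: a hole can use one vertex of $A$ and one nonadjacent vertex of $B$ with hole-vertices outside $A\cup B$ on both sides between them.

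A second gap is the iteration and its cost. Replacing pairs one at a time and re-running the search each round does not obviously stay within $O(m^2)$ total, and replacing one pair locally gives you no control over whether other nonlinear pairs remain. The paper instead uses the \emph{optimal antithickening} of Chudnovsky and King as a single $O(m^2)$ call: it returns a family of disjoint nonlinear homogeneous pairs $\{(A_i,B_i)\}$ together with an optimality guarantee (no submatching can be left uncontracted without a nonlinear pair surviving), the three-vertex expansions are applied to all pairs simultaneously, and it is precisely this optimality condition that certifies the resulting graph has no nonlinear homogeneous pair of cliques --- a point your proposal does not address.
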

\begin{proof}This follows easily from results of Chudnovsky and King on {\em optimal antithickenings} of quasi-line trigraphs \cite{chudnovskyk11}.  In terms of graphs, an optimal antithickening of $G$ is a quasi-line graph $G'$ and a matching of edges $M'\subseteq G'$ with the following properties.
\begin{enumerate}
\item There exist disjoint nonlinear homogeneous pairs of cliques $\{(A_i,B_i)\}_{i =1}^k$ such that if we contract each $A_i$ into a vertex $a_i$ and each $B_i$ into a vertex $b_i$, the result is the graph $G'$, and $M'$ is the matching $\{a_ib_i\}_{i=1}^k$.
\item There is no submatching $M''\subseteq M'$ such that $G'-M''$ contains a nonlinear homogeneous pair of cliques.
\end{enumerate}
We construct the graph $G''$ from $G'$ as follows.  For every edge $a_ib_i$ in $M$, we replace $a_i$ with two adjacent vertices $a_i'$ and $a_i''$ such that $N(a_i')=N(a_i)\cup \{a_i''\}$ and $N(a_i'')=N(a_i)\cup \{a_i'\}\setminus \{b_i\}$.  It is straightforward to confirm that $G''$ is an induced subgraph of $G$ (since each $(A_i,B_i)$ is nonlinear), and that $G''$ contains no nonlinear homogeneous pair of cliques (since there is no induced $C_4$ in $G''$ containing both $a_i'$ and $a_i''$ for some $i$).  By the Homogeneous Pair Lemma \cite{chvatals87}, no minimal imperfect graph (and therefore no odd hole) contains a homogeneous pair; it follows easily that no odd hole $C_G$ in $G$ contains more than one vertex from any $A_i$ or $B_i$ for any $i$, and then that $G''$ contains a shortest odd hole of $G$.

We can find $G'$ and $M$ in $O(m^2)$ time \cite{chudnovskyk11}, and given these we can easily construct $G''$ in $O(m^2)$ time.
\end{proof}

\section{Linear interval strips and circular interval graphs}\label{sec:strips}

We now show that we can compute the desired spans of linear interval strips efficiently and use them to find a shortest odd hole in a circular interval graph.  Recall that we can detect and represent linear and circular interval graphs in linear time \cite{denghh96}.

\begin{lemma}
Let $(S,X,Y)$ be a connected linear interval strip with span length $k$, and let $P$ be a shortest span.  If there is a span of length $>k$, there is a span $P'$ of length $k+1$.  Furthermore we can find $P$ and (if it exists) $P'$ in $O(m)$ time.
\end{lemma}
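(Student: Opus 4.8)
The plan is to work directly with a linear interval representation of $S$ (obtainable in $O(m)$ time by the algorithm of Deng, Hell, and Huang) and order the vertices $v_1 < v_2 < \cdots < v_n$ along the line. The defining feature of such a representation is \emph{consecutiveness}: if $v_a \sim v_c$ and $a < b < c$ then some interval contains $v_a$ and $v_c$, hence also $v_b$, so $v_a \sim v_b \sim v_c$. Consequently each neighbourhood is an interval of the order, and the right-reach $r(i)=\max\{j : v_i\sim v_j\}$ is nondecreasing in $i$ (symmetrically for the left-reach). Applying consecutiveness to three consecutive vertices of any induced path shows that it can never reverse direction; since a span runs from the leftmost clique $X$ to the rightmost clique $Y$, every span is a \emph{strictly increasing} induced path $u_0<u_1<\cdots<u_\ell$ with $u_0\in X$, $u_\ell\in Y$, $u_1\notin X$, $u_{\ell-1}\notin Y$. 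Inducedness reduces to the single local condition $u_{i+1}>r(u_{i-1})$ (all farther non-adjacencies then follow by monotonicity of $r$), so a span is exactly an increasing sequence in which each step lands on a right-neighbour and each new vertex lies beyond the reach of the vertex two steps back. All reach arrays are built in $O(m)$.

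To find a shortest span $P$ I would use the greedy ``farthest-reach'' rule: start at the $X$-vertex of maximum reach and repeatedly jump to $r(\cdot)$ until entering $Y$. Correctness is the standard greedy-stays-ahead argument specialized to this setting: if $P=(u_i)$ is greedy-farthest and $Q=(w_i)$ is any span, then $u_i\ge w_i$ for all $i$ by induction, using $w_{i+1}\le r(w_i)\le r(u_i)=u_{i+1}$ and monotonicity of $r$. Thus $P$ dominates every span pointwise, so it reaches $Y$ in the fewest steps, giving both $P$ and the span length $k$ in $O(n)$ time.

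The substance of the lemma is the near-shortest span. Here I would precompute, by the symmetric right-to-left greedy, the shortest-suffix distance $d^+(v)$ = length of a shortest increasing induced path from $v$ into $Y$; this is a nonincreasing step function of the position $v$, again computable in $O(m)$. I then construct $P'$ by \emph{dawdling exactly once}: scanning the greedy path, I look for the first step $i$ at which the induced-feasible window for the next vertex, namely positions in $(\,r(u_{i-1}),\,r(u_i)\,]$ above $u_i$, contains a vertex $z$ with $d^+(z)=d^+(u_{i+1})+1$; I route through $z$ and then resume the greedy rule from $z$. Because $z<u_{i+1}$ the remaining suffix costs exactly one extra step, yielding a span of length precisely $k+1$. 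The whole search is a single pass over the reach arrays and the shortest-suffix values, so it runs in $O(m)$, and the greedy resumption automatically respects the only local induced constraint $u_{i+2}>r(z)$.

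The crux, and the step I expect to be the main obstacle, is proving that such a dawdle step exists \emph{exactly when} some span longer than $k$ exists; equivalently, that the set of span lengths is an interval, so there is no parity gap immediately above the minimum. The natural pitfall is that one cannot simply subdivide an edge of $P$: since consecutive span vertices are adjacent, inserting a vertex between them would create a chord, so lengthening must re-route through a genuinely different vertex inside the induced-feasible window rather than on the edge itself. I would close this gap with an exchange argument built on the domination property above: given a span $Q$ longer than $k$, compare it with the greedy-farthest $P$; since $P$ dominates $Q$ pointwise yet both terminate in $Y$, there is an ``overtaking'' step where $P$ leaps past a position that $Q$ still occupies, and the feasibility window at that step must contain a vertex whose suffix distance is one larger than the greedy choice's. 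Monotonicity of $d^+$ then guarantees that this window witnesses precisely the $d^+(z)=d^+(u_{i+1})+1$ condition, so the dawdle succeeds and produces the length-$(k+1)$ span.
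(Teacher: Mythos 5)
Your construction of the shortest span $P$ by the farthest-reach greedy, and the pointwise-domination argument for its optimality, match the paper. The gap is in the near-shortest span. Your dawdle test --- find $z$ in the window $(r(u_{i-1}),r(u_i)]$ with $d^+(z)=d^+(u_{i+1})+1$ and resume the greedy from $z$ --- is not sound: the resumed greedy's first move is to $r(z)$, and nothing in the test forces $r(z)>r(u_i)$, so the rerouted walk can acquire the chord between $u_i$ and $r(z)$. (The constraint you cite as ``the only local induced constraint,'' namely $u_{i+2}>r(z)$, is the wrong one; the one at risk is that the vertex \emph{after} $z$ must lie beyond $r(u_i)$.) Concretely, take vertices $1,\dots,7$ with intervals $[1,4]$, $[4,6]$, $[5,7]$ and $X=\{1\}$, $Y=\{7\}$, so $r=(4,4,4,6,7,7,7)$. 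The greedy span is $1,4,6,7$ and $d^+=(3,3,3,2,1,1,0)$. At the first step the window contains $z=2$ with $d^+(2)=d^+(4)+1$, so your scan accepts it; but resuming from $2$ yields $1,2,4,6,7$, which has the chord $1\sim 4$ --- and in fact this strip has no span longer than $1,4,6,7$ at all, since every span's third vertex must avoid $1$ and hence its second vertex must be $4$. So the test can succeed when no length-$(k+1)$ span exists. The converse direction, which you yourself flag as the crux --- that a longer span forces a successful dawdle --- is only asserted via an informal ``overtaking'' exchange; it is not proved, and any correct version of it must control the second vertex of the continuation from $z$, not just its suffix distance.

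The paper avoids distance labels entirely. It builds a \emph{longest} span $Q=q_1,\dots,q_\ell$ by a second greedy (take the leftmost right neighbour of $q_{i-1}$ that is neither adjacent to $q_{i-2}$ nor dominated by $q_{i-1}$), proves $Q$ is longest by the same domination argument you use for $P$, and then, when $\ell\geq k+2$, splices a prefix of $Q$ onto a suffix of $P$ at the first index $i$ such that $q_i$ sees $p_i$ but not $p_{i+1}$; the existence of such an index is shown by taking $i$ minimum with $q_{i+1}\leq p_i$. There the inducedness of the spliced path is built into the splice condition itself, which is precisely the step your argument is missing. If you want to keep your suffix-distance framework, you would need to strengthen the dawdle condition so that some shortest $z$--$Y$ continuation has its first internal vertex beyond $r(u_i)$, and then prove existence of such a $z$ from the existence of a longer span.
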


\begin{proof}
We may assume that $X$ and $Y$ are disjoint, otherwise $k=1$ and the lemma is trivial.  We begin by constructing $P = p_1,\ldots, p_k$.  Let $p_1=v_{|X|}$.  For $i=2,\ldots,k$, let $p_i$ be the rightmost neighbour of $p_{i-1}$.  Continue this process until $p_i=p_k$ is in $Y$.  By the structure of a connected linear interval strip, it is easy to see that this gives us a shortest span $P$ in $O(m)$ time.

Our next task is to construct a longest span $Q = q_1,q_2,\ldots,q_\ell$.  We set dummy vertices $v_0$ and $v_{|S|+1}$ with neighbourhoods $X$ and $Y$ respectively, and set $q_0=v_0$ and let $q_{-1}$ be an isolated vertex.  For $i\geq 1$, we let $q_i$ be the leftmost right neighbour of $q_{i-1}$ which is neither adjacent to $q_{i-2}$ nor dominated by $q_{i-1}$.  We continue this process until $q_i=q_\ell$ is in $Y$.  We can clearly do this in $O(m)$ time.  To see that the process results in a span, note that since $S$ is connected, at least one candidate for $q_i$ exists at each step: the rightmost neighbour of $q_{i-1}$ (the addition of $v_{|S|+1}$ ensures that $v_{|S|}$ is not dominated by a vertex outside of $Y$).  To see that $Q$ is a longest span, suppose there exists a longer span $Q'=q'_1,\ldots,q'_{\ell'}$, and let $j$ be the least index such that $q'_j<q_j$.  Then by the construction of $Q$, either $q'_j$ is adjacent to $q_{j-2}$ or $q'_j$ is dominated by $q_{j-1}$.  Either possibility contradicts the fact that $Q$ is an induced path.

If $\ell=k$ then no span of length $k+1$ exists, and if $\ell=k+1$ we are done.  Otherwise, will find some index $i$ such that $P' = q_1,\ldots,q_i,p_i,\ldots,p_k$ is a span.  We simply choose the smallest such $i$ for which $q_i$ sees $p_i$ but not $p_{i+1}$.  We can clearly do this in $O(m)$ time; we now need to prove that this index $i$ exists.

Choose the index $i$ minimum such that $q_{i+1}\leq p_i$ (clearly $i$ exists because $Q$ has length $\geq k+2$, and $i$ must be greater than 1 because $q_2\notin X$).  Then $q_i> p_{i-1}$, therefore $q_i$ sees $p_i$.  Suppose $q_i$ sees $p_{i+1}$; this implies that $q_{i+2}>p_{i+1}$.  However, this contradicts our choice of $p_{i+1}$ as the rightmost neighbour of $p_i$.  Therefore $i$ exists and we have our construction of $P'$.
\end{proof}

\begin{corollary}
Let $(S,X,Y)$ be a connected linear interval strip.  Then in $O(m)$ time we can find the shortest span on an odd number of vertices or determine that none exists.
\end{corollary}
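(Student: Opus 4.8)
The plan is to apply the preceding lemma and then carry out a short parity case analysis. First I would invoke the lemma to compute, in $O(m)$ time, a shortest span $P$ of length $k$ together with a span $P'$ of length $k+1$ if one exists. The crucial fact supplied by the lemma is that a span of length $k+1$ exists precisely when \emph{some} span of length strictly greater than $k$ exists; this equivalence is exactly what will let us decide the odd case without ever inspecting longer spans.

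Next I would split on the parity of $k$. If $k$ is odd, then $P$ is itself a span on an odd number of vertices, and since $k$ is the minimum span length, $P$ is automatically a shortest such span, so I would return $P$. If $k$ is even, then every span of length $k$ has an even number of vertices, so the smallest candidate for an odd span has length $k+1$. In this case I would check whether $P'$ exists: if it does, it is a span on $k+1$ (an odd number of) vertices, and no shorter odd span can exist, so I would return $P'$.

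To dispatch the remaining subcase---$k$ even and $P'$ nonexistent---I would appeal to the contrapositive of the lemma's guarantee. Since no span of length $k+1$ exists, no span of length exceeding $k$ exists at all; hence every span has length exactly $k$, which is even, and I would report that no span on an odd number of vertices exists. Because the lemma produces $P$ and $P'$ in $O(m)$ time and the subsequent case analysis is constant work, the total running time is $O(m)$.

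The statement is an almost immediate corollary of the lemma, so I do not expect a genuine obstacle; the only point requiring care is the final subcase, where one must recognize that the absence of a length-$(k+1)$ span rules out odd spans of \emph{every} length, not merely of length $k+1$. This is precisely the content of the lemma, whose explicit construction of the longest span $Q$ establishes that length $k+1$ is attainable whenever any length above $k$ is.
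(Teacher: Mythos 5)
Your proposal is correct and is exactly the argument the paper intends: the corollary is stated without proof as an immediate consequence of the preceding lemma, and your parity case analysis (return $P$ if $k$ is odd; otherwise return $P'$ if it exists; otherwise conclude via the contrapositive of the lemma that every span has even length $k$) is the right way to fill it in. You also correctly identify the one point needing care, namely that the lemma's guarantee that a length-$(k+1)$ span exists whenever any span longer than $k$ does is what rules out odd spans entirely in the final subcase.
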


\begin{corollary}\label{cor:cig}
Let $G$ be a circular interval graph with $\alpha(G)\geq 3$.  Then we can find a shortest odd hole in $G$ in $O(m^2)$ time.
\end{corollary}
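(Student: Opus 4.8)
The plan is to reduce the shortest-odd-hole problem on a circular interval graph $G$ to finding shortest odd cycles in an auxiliary weighted graph built from the arc structure of $G$'s circular interval representation. First I would invoke Deng, Hell, and Huang \cite{denghh96} to obtain a circular interval representation of $G$ in $O(m)$ time. The key observation is that an odd hole in $G$, being an induced cycle, must wind around the circle: its vertices appear in cyclic order on the unit circle, and consecutive hole-vertices must be joined by an arc while non-consecutive ones must not share any arc. So the hole decomposes into a sequence of ``steps,'' each of which traverses a maximal stretch of the circle covered by a single arc, and each such stretch is exactly a span of a linear interval strip in the sense of Section \ref{sec:strips}.

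Concretely, I would partition the circle into a bounded collection of linear interval strips determined by the endpoints of the arcs, so that each maximal arc-covered segment becomes a strip $(S,X,Y)$ with the end-cliques corresponding to the overlap regions with neighbouring arcs. Then an odd hole corresponds to a closed walk around the circle that passes through a cyclic sequence of these strips, entering and leaving each through its end-cliques, and contributing a span within each strip. By the Lemma, for each strip I can compute in $O(m)$ time both a shortest span and (if it exists) a near-shortest span whose length is larger by exactly one; this is precisely the parity-correction tool I need, since the total number of vertices on the hole must be odd.

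The main step is then a weighted shortest-cycle computation. I would form a graph on the ``junction cliques'' between consecutive arcs, weight each strip-edge by its span length $\ell_e$, and run a shortest-path search around the circle, using a near-shortest span in exactly one strip to flip the overall parity so that the resulting cycle has an odd number of vertices. Because the structure is essentially one-dimensional (the strips are arranged cyclically around the circle), a single pass with Dijkstra-style shortest paths from each possible starting junction suffices; iterating over the $O(\sqrt m)$ relevant junctions and strips, and recalling from the Corollary that span parities and lengths are all precomputable in $O(m)$ total time, yields the claimed $O(m^2)$ bound. The hypothesis $\alpha(G)\geq 3$ guarantees that a genuine odd hole (rather than a degenerate short cycle) is what we are detecting, so the length-$\geq 5$ constraint is automatically respected.

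\emph{The hardest part} will be verifying the correspondence between induced odd holes in $G$ and parity-corrected closed walks through the strip decomposition: one must check that choosing a shortest span in every strip but one, and a near-shortest span in the remaining strip, actually produces an \emph{induced} cycle rather than a cycle with chords, and that conversely every shortest odd hole arises this way. This requires arguing that chords cannot appear between non-consecutive strips (which follows from the arc structure, since a chord would force two non-adjacent hole vertices into a common arc) and that within a strip the span machinery of the Lemma already guarantees inducedness. Once this structural correspondence is pinned down, the algorithmic and running-time claims follow routinely.
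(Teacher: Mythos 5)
There is a genuine gap here: your reduction presupposes a fixed, global decomposition of the circular interval graph into linear interval strips arranged cyclically around the circle, joined at ``junction cliques'' through which every odd hole must pass. No such canonical decomposition exists --- this is precisely why circular interval graphs appear as a separate base class in Chudnovsky and Seymour's structure theorem rather than as compositions of strips, and why Lemma \ref{lem:hole} (the cycle-through-strips correspondence you are implicitly relying on for the parity-correction step) is only available once an optimal strip decomposition has been produced by Theorem \ref{thm:stripalgorithm}, which explicitly excludes the circular interval case. In a graph such as a high power of a long cycle, the arcs overlap all the way around with no distinguished cut points, holes can be anchored anywhere, and there is no bounded set of cliques that all of them thread through; ``maximal arc-covered segments'' may well be the entire circle. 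The $O(\sqrt m)$ count of junctions and the claim that exactly one near-shortest span fixes parity both inherit this problem, so the correspondence you flag as ``the hardest part'' is not merely hard to verify --- it is not available in the form you need.

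The paper avoids the need for a global decomposition by brute-forcing the anchor: for each edge $x_1x_j$ (with $x_j$ a clockwise neighbour of $x_1$; the hypothesis $\alpha(G)\geq 3$ ensures it is not simultaneously a counterclockwise neighbour), it deletes $\tilde N(x_1)\cap \tilde N(x_j)$ to cut the circle open into a \emph{single} linear interval strip $(S,X,Y)$ with $X=N(x_j)\setminus N(x_1)$ and $Y=N(x_1)\setminus N(x_j)$, and then observes that odd holes through $x_1x_j$ correspond exactly to spans of $(S,X,Y)$ on an odd number of vertices. The span machinery of Section \ref{sec:strips} then finds the shortest such span in $O(m)$ time, giving $O(m)$ per edge and $O(m^2)$ overall. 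If you want to keep your framework, the fix is essentially to let the ``decomposition'' depend on the anchor edge in this way rather than trying to define it once from the representation.
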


\begin{proof}
It suffices to show that for any edge $x_ix_j$ of $G$ we can find a shortest odd hole containing $x_ix_j$, or determine that none exists, in $O(m)$ time.  For simplicity, we may assume that $i=1$ and $x_j$ is a clockwise neighbour of $x_i$.  Since $\alpha(G)\geq 3$, it is not a counterclockwise neighbour of $x_i$.

Let $X$ be the set of vertices in $N(x_j)\setminus N(x_1)$, and let $Y$ be the set of vertices in $N(x_1)\setminus N(x_j)$.  Clearly $X$ is a set of clockwise neighbours of $x_j$, and $Y$ is a set of counterclockwise neighbours of $x_1$.  Now let $S = G[V(G)\setminus (\tilde N(x_1)\cap \tilde N(x_2))]$, and observe that $(S,X,Y)$ is a linear interval strip; if it is not connected then there is no hole in $G$ containing $x_1x_j$.  Let $P$ be a shortest span of $(S,X,Y)$ on an odd number of vertices.

If $P$ exists, then $P\cup \{x_1,x_j\}$ induces an odd hole in $G$: $P$ contains at least three vertices, since $X$ and $Y$ are disjoint.  Furthermore observe that for any odd hole $H$ in $G$ containing $x_1$ and $x_j$, $H-\{x_1,x_j\}$ is a span of $S$.  Thus $P$ is a shortest odd hole in $G$ containing $x_1x_j$.  This also tells us that if $P$ does not exist, there is no odd hole in $G$ containing $x_1x_j$.  By the previous corollary, it is clear that we can construct $(S,X,Y)$ and find $P$ in $O(m)$ time.
\end{proof}

\section{Decomposing compositions of linear interval strips}

It now remains to deal with compositions of linear interval strips.  Given a composition of linear interval strips, an {\em optimal strip decomposition} consists of a multigraph $H$ and a set of linear interval strips $\{(S_e,X_e,Y_e \mid e\in E(H)\}$ with the property (among others) that each $S_e$ is either (i) a singleton with $X_e=Y_e=V(S_e)$, or (ii) connected, not a clique, and has $X_e$ and $Y_e$ nonempty and disjoint.

\begin{theorem}[\cite{chudnovskyk11}]\label{thm:stripalgorithm}
Let $G$ be a connected quasi-line graph containing no nonlinear homogeneous pair of cliques.  Then in $O(nm)$ time we can either determine that $G$ is a circular interval graph or find an optimal strip representation of $G$.
\end{theorem}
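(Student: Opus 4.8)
The plan is to proceed in two stages, mirroring the dichotomy in the structure theorem. First I would run the Deng, Hell, and Huang recognition algorithm \cite{denghh96} to test in $O(m)$ time whether $G$ admits a circular interval representation; if it does, we report this and terminate. Otherwise the structure theorem guarantees that $G$ is a composition of linear interval strips, and the remaining task is to recover a witnessing decomposition consisting of a multigraph $H$ together with strips $\{(S_e,X_e,Y_e) \mid e\in E(H)\}$ satisfying the stated optimality conditions.

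To build the decomposition I would first locate the hub cliques, which are the vertex sets on which distinct strips are glued together. Since $G$ is quasi-line, every closed neighbourhood is covered by two cliques; the key observation is that a vertex lying strictly inside a strip has its neighbourhood arranged along the linear interval order of that strip, whereas a vertex belonging to a hub clique $C_v$ additionally sees the incident end-cliques of every other strip meeting $v$. I would exploit this by examining, for each vertex, the interval structure of its neighbourhood and detecting where the linear behaviour of a strip breaks down; the cliques witnessing these breakpoints are the candidate hubs. The hypothesis that $G$ contains no nonlinear homogeneous pair of cliques is exactly what guarantees, via the structure theorem, that each strip is a genuine linear interval graph rather than one containing an induced $C_4$, so that once the hubs are fixed the strips fall out as the connected pieces obtained by deleting the hub structure, each admitting a linear interval representation recoverable by \cite{denghh96} and yielding $X_e$ and $Y_e$ as the end-cliques adjacent to the two incident hubs.

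The main obstacle will be making the choice of hub cliques canonical so that the resulting strips meet the optimality conditions --- namely that each $S_e$ is either a singleton with $X_e=Y_e=V(S_e)$, or is connected, is not a clique, and has $X_e$ and $Y_e$ nonempty and disjoint. The difficulty is that the raw decomposition is not unique: a strip that happens to be a clique can be absorbed into its incident hubs, and a poorly chosen hub can force $X_e\cap Y_e\neq\emptyset$ on a strip that is not a singleton. Resolving this requires a careful local rule for deciding, at each potential hub, precisely which vertices to assign to the hub versus to a strip interior, together with a proof that the rule produces a decomposition no coarser than necessary. The remaining bookkeeping --- iterating over the $O(n)$ vertices and, for each, computing an interval representation of its neighbourhood in time $O(d(v)^2)=O(m)$ via the Tur\'an bound $d(v)\le 2\sqrt{m}$ --- then delivers the claimed $O(nm)$ running time.
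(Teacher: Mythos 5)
The paper offers no proof of this statement at all: it is imported verbatim from \cite{chudnovskyk11}, so there is no internal argument to compare yours against. Judged on its own terms, your sketch has the right overall shape (dispose of the circular interval case via \cite{denghh96}, then recover $H$ and the strips), but it has a genuine gap, and it is one you flag yourself. The entire content of the theorem is the ``careful local rule'' for identifying the hub cliques and certifying the optimality conditions (each $S_e$ a singleton with $X_e=Y_e=V(S_e)$, or else connected, not a clique, with $X_e$ and $Y_e$ nonempty and disjoint). You leave that rule unspecified: writing that resolving it ``requires a careful local rule \dots\ together with a proof that the rule produces a decomposition no coarser than necessary'' restates the problem rather than solving it, and nonuniqueness of the decomposition (cliques absorbable into hubs, overlapping end-cliques) is precisely where the work lies.

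Moreover, the one concrete criterion you do offer --- that hub vertices are detectable because the linear interval behaviour of their neighbourhoods ``breaks down'' --- is not established and is false as stated. In the line-graph case every strip is a singleton and every vertex lies in two hub cliques, yet each neighbourhood is simply the union of two cliques with no interval anomaly to detect; more generally, a hub at which only two strips meet can have every vertex's neighbourhood consistent with a single linear order, so breakpoint detection in individual neighbourhoods does not locate the hubs. This ambiguity about where one strip ends and the next begins is exactly why \cite{chudnovskyk11} needs a nontrivial global argument. The concluding $O(nm)$ bookkeeping (iterating over $n$ vertices at cost $O(d(v)^2)=O(m)$ each via the Tur\'an bound) is fine, but it rests entirely on the unproven middle step.
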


We remark that $G$ is a line graph precisely if it has a strip decomposition in which every strip $S_e$ is a singleton with $X_e=Y_e=V(S_e)$.

\section{Completing the proof}

We now describe the structure of a hole in relation to an optimal strip decomposition.

\begin{lemma}\label{lem:hole}
Let $G$ be a composition of linear interval strips, and let $C_G$ be a hole in $G$.  Then for any optimal strip decomposition of $G$, the vertex set of $C_G$ can be expressed as 
$$V(C_G)= \cup\{P_e \mid e\in C_H \}$$
where $C_H$ is a cycle (possibly a diad or a loop) in $H$ and $P_e$ is a span of $(S_e,X_e,Y_e)$.
\end{lemma}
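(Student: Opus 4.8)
The plan is to recover $C_H$ and the spans $P_e$ directly from the way $C_G$ passes through the strips, using repeatedly the elementary observation that a hole meets any clique of $G$ in at most two vertices, and that two such vertices are necessarily consecutive on the hole (otherwise they would form a chord). I would apply this both to the end-cliques $X_e, Y_e$ and to the hub cliques $C_v$.

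First I would cut $C_G$ into arcs. Since the strips partition $V(G)$ and the only edges between distinct strips lie inside hub cliques, traversing $C_G$ and breaking it at every edge joining two different strips partitions it into maximal arcs, each an induced subpath lying in a single strip $S_e$. A vertex ending such an arc has its next neighbour on $C_G$ in another strip, so it lies in an end-clique $X_e \cup Y_e$. Reading the strips in cyclic order then yields a closed walk in $H$ whose successive edges are these strips and whose intermediate vertices are the hubs $v$ at whose clique $C_v$ consecutive arcs meet.

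Next I would prove that each arc is a span $P_e$, namely that its two ends lie in different end-cliques $X_e$ and $Y_e$ (the two ends coincide, trivially giving a span, exactly when $S_e$ is a singleton). This is the heart of the matter. If instead both ends of an arc lay in the same end-clique, say $X_e \subseteq C_v$ with $v = \text{tail}(e)$, then the arc would be entered and left through $C_v$ alone, so $C_v$ would contain the arc's ends together with their off-strip neighbours --- at least three vertices of $C_G$ in one clique, contradicting the clique bound. The same bound shows that no non-singleton strip carries two arcs: each would be a span, and their two $X_e$-ends would be adjacent (lying in the clique $X_e$) yet non-consecutive on $C_G$, a chord. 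Hence the strips that appear are distinct and $P_e = V(C_G) \cap V(S_e)$ is a single span for each.

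Finally I would check that the closed walk is a cycle. Distinctness of the strips makes its edges distinct, and no hub can repeat: two transitions at the same $C_v$ would force four vertices of $C_G$, drawn from four distinct strips, into the clique $C_v$. I expect the genuine obstacle to be stating the small cases cleanly. When the walk has length one, $C_G$ lies inside a single strip $S_e$, and its closing edge --- an edge of $G$ but not of the interval graph $S_e$, which contains no hole --- must come from a hub clique, forcing $e$ to be a loop and the two ends of the span into $X_e$ and $Y_e$ respectively; when the walk has length two it is a diad, which the clique bounds permit. Assembling these cases gives $V(C_G) = \cup\{P_e \mid e \in C_H\}$ with $C_H$ a cycle, diad, or loop of $H$, as required.
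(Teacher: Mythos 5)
Your proof is correct and follows essentially the same route as the paper's: both arguments rest on the observation that a hole meets any clique (hub clique or end-clique) in at most two, necessarily consecutive, vertices, and deduce that the trace of $C_G$ on each strip is a single span while the strips visited form a cycle in $H$. You organize the argument by cutting the hole into strip-arcs where the paper instead counts hub-clique intersections (showing each is $0$ or $2$), and you are more explicit about the degenerate loop and diad cases, but the substance is the same.
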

\begin{proof}
It is enough to observe two things.  First, $C_G$ must intersect each hub clique $C_v$ exactly 0 or 2 times.  At most twice because $C_v$ is a clique, and not exactly once because if a vertex $u$ of $C_G$ is in $C_v\cap X_e$, and no neighbour of $u$ is in $C_v \cap C_G$, then $C_G$ must contain two nonadjacent vertices in $N(u)\setminus C_v$, a clique.  This is clearly impossible.

Second, for a strip $(S_e,X_e,Y_e)$, if an odd hole intersects $X_e$ it must also intersect $Y_e$, and $S_e\cap C_G$ induces a span of $(S_e,X_e,Y_e)$.  To see this, let $u$ denote the rightmost vertex of $S_e$ in $C_G$.  Clearly $u$ cannot be in $X_e$ unless $X_e=Y_e=\{u\}$, since the strip decomposition is optimal.  Further, the structure of linear interval graphs tells us that at least one neighbour of $u$ must be outside $S_e$, implying that $u\in Y_e$.
\end{proof}

Before dealing with compositions of linear interval strips we must prove a useful lemma on the structure of shortest odd cycles of length $\geq 5$.  For two vertices $u$ and $v$ in a graph, we denote the distance between $u$ and $v$, i.e.\ the length of a shortest $u$-$v$ path, by $\dist(u,v)$.  We denote the length of a path or cycle $P$, i.e.\ the number of edges in it, by $\len(P)$.

\begin{lemma}\label{lem:cycles}
Let $H$ be a graph containing no cycle of length 5, and let $C$ be a shortest odd cycle of length $\geq 7$ in $H$.  Then $C$ contains an edge $v_1v_2$ and an opposite vertex $w$ so that $C$ is the union of $v_1v_2$, a shortest $v_1-w$ path, and a shortest $v_2-w$ path.
\end{lemma}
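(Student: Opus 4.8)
The plan is to label $C = u_0 u_1 \cdots u_{2k}$ with indices taken mod $2k+1$, where $2k+1 = \len(C) \geq 7$, and for each edge $u_i u_{i+1}$ to take as the opposite vertex $w = u_{i+k+1} = u_{i-k}$. Then $C$ is the edge $u_i u_{i+1}$ together with the two arcs $u_i u_{i-1}\cdots u_{i-k}$ and $u_{i+1}u_{i+2}\cdots u_{i+k+1}$, each of length exactly $k$, so the lemma amounts to finding one edge for which both of these length-$k$ arcs are shortest paths in $H$, i.e. $\dist(u_i,u_{i-k}) = \dist(u_{i+1},u_{i+k+1}) = k$. For an ordinary shortest odd cycle every sub-arc of length at most $k$ would automatically be a geodesic and any edge would do; the whole difficulty is that $H$ may contain triangles, and a triangle can shortcut a length-$2$ sub-arc. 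Controlling these triangles, i.e. the chords of $C$, is the crux.

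First I would establish the key structural fact: \emph{every chord of $C$ joins two vertices at cyclic distance exactly $2$}. A chord $u_a u_b$ splits $C$ into two cycles whose lengths sum to $2k+3$, so exactly one of them is odd; that odd cycle has length in $[3,2k]$, and being shorter than $C$ it is neither a $C_5$ (by hypothesis) nor an odd cycle of length $\geq 7$ (by minimality of $C$), hence it is a triangle and the chord spans $2$. Next I would rule out long-range interaction between distinct chords: given two span-$2$ chords that do not interleave (i.e. are not of the form $u_a u_{a+2}$ and $u_{a+1}u_{a+3}$), concatenating the two chords with the two arcs they cut off yields a simple odd cycle of length $2k-1$, again either a $C_5$ or a cycle shorter than $C$ --- a contradiction. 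Consequently $C$ has at most two chords, and if it has two they form a single interleaving pair occupying four consecutive vertices $u_a,u_{a+1},u_{a+2},u_{a+3}$.

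It remains to transfer these statements from $C$ to $H$ and then pick the edge. I would show that off-cycle detours never help: writing a shortest $u_a$--$u_b$ path in $H$ as a concatenation of maximal subpaths internally disjoint from $C$, each such subpath of length $\geq 2$ cannot strictly shortcut the arc it spans (otherwise, closing it with one of the two arcs produces a simple odd cycle that is a $C_5$ or is shorter than $C$), while each length-$1$ subpath is one of the span-$2$ chords already present. Hence $\dist_H$ between vertices of $C$ equals their distance in $C$ together with its span-$2$ chords, and the finish is a short case check. If $C$ has no chord, every length-$k$ arc is the shorter of the two arcs and is therefore a geodesic, so any edge works. If $C$ has a single chord, or the interleaving pair, I would take the edge $u_{a+1}u_{a+2}$ sitting under the chord window: one verifies directly that neither of its two length-$k$ arcs contains a chord in its interior, and that although the complementary length-$(k+1)$ route may be shortened by a single interior chord, it is thereby reduced only to length $k$, so both arcs remain geodesics.

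The main obstacle is precisely the interplay between minimality of $C$ and the triangles allowed by the hypothesis: the naive ``all short sub-arcs are geodesic'' property fails, and the real content is the two-step chord analysis --- every chord spans $2$, and non-interleaving chords are forbidden --- which confines all the damage to a window of four consecutive vertices and leaves an edge whose two half-arcs are both safe.
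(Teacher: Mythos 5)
Your proposal is correct and follows essentially the same route as the paper's proof: both rest on showing that every chord of $C$ must form a triangle with two cycle edges (else a shorter odd cycle of length $\geq 5$ appears), that the chords are confined to a single window of consecutive vertices, and that any off-cycle shortcut between two cycle vertices would close up into a forbidden odd cycle of length between $5$ and $2k$. Your version is, if anything, slightly more explicit in the endgame (proving that $H$-distances between cycle vertices coincide with distances in $C$ plus its chords, then checking the chosen edge directly), where the paper argues by contradiction from a hypothetical short $v_1$--$w$ path.
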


\begin{proof}
If $C$ has a chord that does not form a triangle with two edges of $C$, then $H$ contains a shorter odd cycle of length $\geq 5$, a contradiction.  Further, $C$ cannot have two non-crossing chords.  Thus there are two consecutive vertices of $C$, call them $v_1$ and $v_2$, such that every chord of $C_H$ has an endpoint in $\{v_1,v_2\}$ and forms a triangle in $H[V(C)]$ containing both $v_1$ and $v_2$.

Suppose $C$ has length $2k+1$ with $k\geq 3$, and let $w$ be the vertex of $C$ opposite from $v_1v_2$.  That is, in the induced subgraph $H[V(C)]$ both $\dist(v_1,w)$ and $\dist(v_2,w)$ are equal to $k$.  It suffices to prove that if in $H$, $\dist(v_1,w)<k$, then $C$ is not a shortest odd cycle of length $\geq 5$ in $H$.  So let $P$ be a shortest $v_1$-$w$ path of length $<k$.  For any two vertices $v'$ and $w'$ in $V(P)\cap V(C)$, let $P_{v'w'}$ be the subpath of $P$ from $v'$ to $w'$ and let $C_{v'w'}$ be the shorter of the two subpaths of $C$ from $v'$ to $w'$.  Since $\len(P)<k$, there must exist $v'$ and $w'$ in $V(P)\cap V(C)$ such that the internal vertices of $P_{v'w'}$ are not in $C$, and $\len(P_{v'w'})<\len(C_{v'w'})$.  Let $C'_{v'w'}$ denote the longer of the two subpaths of $C$ from $v'$ to $w'$.  

Observe that $P_{v'w'}$ cannot be a chord of $C$; this follows from our choice of $v_1$ and $v_2$.  Now $P_{v'w'}$, $C_{v'w'}$, and $C'_{v'w'}$ are internally vertex disjoint $v'-w'$ paths such that $2\leq \len(P_{v'w'}) < \len(C_{v'w'})<\len(C'_{v'w'})$.  It follows that $P_{v'w'}\cup C_{v'w'}$ and $P_{v'w'}\cup C'_{v'w'}$ are both cycles of length between $4$ and $2k$, and one of them is odd.  This contradicts our choice of $C$.
\end{proof}

Along with the results we have already proved, the following lemma immediately implies Theorem \ref{thm:main}.
\begin{lemma}
Given a composition of linear interval strips $G$ containing no nonlinear homogeneous pair of cliques, we can find a shortest odd hole in $G$ in $O(m^2+n^2\log n)$ time.
\end{lemma}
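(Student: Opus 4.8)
The plan is to follow the proof sketch from Section 4 almost verbatim, organizing the argument around the optimal strip decomposition of $G$ guaranteed by Theorem \ref{thm:stripalgorithm}. First I would apply that algorithm in $O(nm)$ time to obtain the multigraph $H$ together with the strips $\{(S_e, X_e, Y_e)\}_{e \in E(H)}$, where each strip is either a singleton or connected with disjoint nonempty end-cliques. Using the corollary to the span lemma, I would compute for every edge $e$ a shortest span and (if it exists) a near-shortest span, setting the weight of $e$ to be its span length $\ell_e$ and recording the set $E_+(H)$ of edges admitting a near-shortest span; this takes $O(nm)$ total time. By Lemma \ref{lem:hole}, every hole in $G$ projects onto a cycle $C_H$ in $H$ whose edges carry spans, so finding a shortest odd hole reduces to finding a minimum-weight cycle in $H$ subject to the right parity and length-$\geq 5$ constraints. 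The key observation is that the only way to adjust the parity of the hole, once $C_H$ is fixed, is to use a near-shortest span on some edge of $C_H$, which is available precisely when that edge lies in $E_+(H)$.

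I would split the search into two cases exactly as in the sketch. For holes intersecting $\cup\{V(S_e) \mid e \in E_+(H)\}$, I would iterate over each $e \in E_+(H)$ and look for a minimum-weight cycle through $e$ of length at least four in $H$: delete $e$ (and, when $\ell_e = 2$, delete parallel span-length-one edges to avoid creating a short spurious cycle) and run Dijkstra between the endpoints of $e$. Since $e \in E_+(H)$ guarantees $\ell_e \geq 2$ and a spare vertex, the resulting cycle yields an odd hole whose parity I can correct by swapping in the near-shortest span on $e$. Each Dijkstra call costs $O(|E(H)| + |V(H)|\log|V(H)|) = O(n\log n)$, and $|E_+(H)| = O(n)$, so this phase runs in $O(n^2 \log n)$. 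For holes avoiding those strips, I would discard all but a shortest span of each strip $S_e$ with $e \notin E_+(H)$; the remaining graph $G'$ is the line graph of a multigraph $H'$ recoverable in $O(m)$ time, and a shortest odd hole of $G$ avoiding the $E_+$ strips corresponds to a shortest odd cycle of length $\geq 5$ in $H'$.

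The crux of the argument, and the step I expect to require the most care, is finding a shortest odd cycle of length $\geq 5$ in $H'$ within the time budget. I would first test for a $C_5$ in $O(|V(H)| \cdot |E(H)|)$ time by a direct neighbourhood search. If none exists, then Lemma \ref{lem:cycles} applies: any shortest odd cycle of length $\geq 7$ decomposes as an edge $v_1v_2$ together with shortest $v_1$--$w$ and $v_2$--$w$ paths to an opposite vertex $w$. This structural fact is what makes the search tractable, because it lets me reduce the problem to shortest-path computations rather than an exhaustive cycle enumeration: for each candidate edge or vertex I can build the cycle from shortest paths and verify it is induced and of the correct parity, giving a total of $O(|V(H)|^2 \log |V(H)|) = O(n^2\log n)$ for this phase. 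The main obstacle is arguing correctness of this reduction — in particular, confirming via Lemma \ref{lem:hole} that every odd hole of $G$ is captured by one of the two phases (so that taking the shortest hole found overall is genuinely optimal), and checking that the parity-correction mechanism on $E_+(H)$ edges together with the length-$\geq 5$ requirement never misses a shortest odd hole. Finally, I would tally the running times: $O(m^2)$ for eliminating homogeneous pairs of cliques (from the earlier theorem), $O(nm)$ for the decomposition and span computations, and $O(n^2\log n)$ for the two cycle-search phases, giving the claimed $O(m^2 + n^2\log n)$ bound.
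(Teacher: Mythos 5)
Your outline tracks the paper's proof at a high level, but it has two genuine gaps. First, you omit the loop case. Lemma \ref{lem:hole} allows the cycle $C_H$ to be a loop, and a hole living inside a single strip $S_e$ with $e$ a loop of $H$ (where $G[V(S_e)]$ is a circular interval graph) is not captured by either of your two phases: your Dijkstra step finds a path between the two endpoints of $e$, which degenerates when $e$ is a loop, and your line-graph phase asks for a cycle of length $\geq 5$ in $H'$, which a loop need not supply in the form your search expects. The paper handles this as a separate first case, finding a shortest odd hole in each loop strip directly via Theorem \ref{thm:alpha2} and Corollary \ref{cor:cig}, and only then discards all loops so that the remaining two cases are clean.

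Second, and more seriously, you defer exactly the step that needs an idea: turning Lemma \ref{lem:cycles} into an algorithm. Saying ``for each candidate edge or vertex I can build the cycle from shortest paths'' hides the real difficulty, which is that an arbitrary shortest $v_1$--$w$ path and an arbitrary shortest $v_2$--$w$ path need not be internally vertex disjoint, in which case their union with $v_1v_2$ is not a cycle at all. The paper's solution is to augment the all-pairs shortest path computation so that for each pair $(u,w)$ it stores up to \emph{two} distinct first vertices $\next(u,w)$ of shortest $u$--$w$ paths, to restrict attention to witnesses $w$ with $\dist(v_1,w)=\dist(v_2,w)\geq 3$ and $|\next(v_1,w)\cup\next(v_2,w)|\geq 2$, and to argue that for the minimizing choice of $v_1v_2$ and $w$ the two paths launched from distinct next-vertices cannot re-intersect internally. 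Without this (or some substitute), your phase-three search is not justified. Also, note that the cycle sought in $H'$ is \emph{not} required to be induced (the paper is explicit about this), so ``verify it is induced'' is the wrong check; and your $O(|V(H')|\cdot|E(H')|)$ test for a $5$-cycle is itself a nontrivial algorithm (the paper cites Monien), not a direct neighbourhood search. Finally, the $O(m^2)$ term for eliminating homogeneous pairs does not belong in this lemma's accounting, since the hypothesis already assumes $G$ has none; in the paper the $O(m^2)$ cost inside this lemma comes from the loop/circular-interval case you skipped.
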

\begin{proof}
We first find an optimal strip decomposition of $G$.  This gives us the underlying multigraph $H$ along with strips $S_e$ for $e\in E(H)$.  Now for each strip $(S_e,X_e,Y_e)$, we find a shortest span $P_e$ and, if one exists, a near-shortest span $Q_e$.  Let $\ell_e$ be the span length of $S_e$.  Set $E_+(H)$ as the set of edges $e$ of $H$ for which $Q_e$ exists.  Let $V_+(G) = \cup\{ V(S_e) \mid e\in E_+(H)  \}$.  We can decompose $G$ determine $E_+(H)$, and find $P_e$ and $Q_e$ in $O(n^2m)$ time.  We find three odd holes (or determine that they do not exist): a shortest odd hole in $S_e$ corresponding to a loop $e$, a shortest odd hole intersecting $V_+(G)$ and not intersecting $\cup \{V(S_e)\mid e\textup{ is a loop}\}$, and a shortest odd hole intersecting neither $V_+(G)$ nor $\cup \{V(S_e)\mid e\textup{ is a loop}\}$.  The shortest of these odd holes is a shortest odd hole in $G$.  It actually suffices to search for the first cycle in $G[\cup \{V(S_e)\mid e\textup{ is a loop}\}]$ and search for the second and third cycles in $G[\cup \{V(S_e)\mid e\textup{ is not loop}\}]$, so we do so.\\

\noindent{\bf Case 1: Holes intersecting} $\cup \{V(S_e)\mid e\textup{ is a loop}\}${\bf .}

If $e$ is a loop in $H$, then $G[V(S_e)]$ is a circular interval graph.  Thus it suffices to find, for each loop $e$, the shortest odd hole in $G[V(S_e)]$.  It follows from Theorem \ref{thm:alpha2} and Corollary \ref{cor:cig} that we can do this in $O(m^2)$ time since the induced subgraphs are disjoint.  Once we have found a shortest odd hole intersecting $\cup \{V(S_e)\mid e\textup{ is a loop}\}$, we can discard the loops of $e$ and their corresponding strips, and henceforth assume that $H$ contains no loops.\\

\noindent{\bf Case 2: Holes intersecting $V_+(G)$.}

Give every edge $e$ of $H$ weight $\ell_e$, and note that as a property of an optimal strip decomposition, every $e\in E_+(H)$ has $X_e$ and $Y_e$ disjoint and therefore $\ell_e\geq 2$.  For each $e\in E_+(H)$ with endpoints $u$ and $v$ we do the following:
\begin{itemize}
\item Begin with $H$ and remove $e$.  If $\ell_e=2$, also remove any edge between $u$ and $v$ with weight 1.
\item Find a minimum weight path $P$ between $u$ and $v$ in the remaining graph.
\end{itemize}

Depending on parity, either 
$$\cup \{P_{e'} \mid e'\in P\}\cup P_e \mathrm{\ \ \   or \ \ \  }\cup \{P_{e'} \mid e'\in P\}\cup Q_e$$ is a shortest odd hole intersecting $S_e$.  It is easy to see that both are holes, and Lemma \ref{lem:hole} tells us that there is no shorter odd hole in $G$ intersecting $S_e$.

We now have, for each $e\in E_+(H)$, the shortest odd hole in $G$ intersecting $S_e$.  Thus we have a shortest odd hole intersecting $V_+(G)$.  Since $|E(H)|\leq n$, and we can run Dijkstra's algorithm in $O(|E(H)|+|V(H)|\log|V(H)|)$ time for each $e\in E_+(H)$ \cite{fredmant87}, we can find the shortest odd hole intersecting $V_+(G)$ in $O(n^2\log n)$ time.\\

\noindent{\bf Case 3: Holes not intersecting $V_+(G)$.}

Let $G'$ be the subgraph of $G$ induced on $\cup \{ P_e \mid e\notin E_+(H) \}$.  Since for $e\notin E_+(H)$, every span of $(S_e,X_e,Y_e)$ has the same length, Lemma \ref{lem:hole} tells us that $G'$ contains a shortest odd hole of $G$ not intersecting $V_+(G)$.  Since $G'$ is a composition of strips, each one of which is an induced path, it is easy to see that $G'$ is a line graph (see Lemma 4.1 in \cite{chudnovskyo07} for a proof of a stronger result).  Let $H'$ be the multigraph of which $G'$ is the line graph -- it is well known that we can find $H'$ in $O(m)$ time (see for example \cite{roussopoulos73}).  Further note that $H'$ has at most $n$ edges, and we are free to remove duplicate edges from $H'$ (they correspond to vertices with the same closed neighbourhood, no two of which can exist in an odd hole).

An odd hole in $G'$ will correspond to an odd cycle (not necessarily induced) in $H'$.  We first search for a $C_5$ in $H'$, knowing that a $5$-hole in $G'$ would be a shortest odd hole in $G$.  We can do so using an $O(|E(H')|\cdot |V(H')|)$-time algorithm of Monien \cite{monien85}, which amounts to $O(n^2)$ time.  So assume $G$ does not contain a $5$-hole.  We therefore seek a shortest odd cycle of length $\geq 5$ in $H'$, which must have length $\geq 7$.  Assume that $H'$ is connected, otherwise we can deal with its connected components individually.

Before we find such a cycle we must compute an all-pairs shortest path matrix using Dijkstra's algorithm.  Normally this involves computing and storing, for each ordered pair of vertices $(u,v)$, $\dist(u,v)$ and a vertex $\next(u,v)$, which is a neighbour of $u$ lying on a shortest $u$-$v$ path.  Instead of storing just one vertex $\next(u,v)$, we want to store a set of up two vertices with this property, if two exist, and otherwise store a single vertex with this property.  This added computation and storage can easily be integrated with the standard $O(n^2\log n)$-time implementation of the all-pairs shortest path version of Dijkstra's algorithm \cite{fredmant87}, and it will soon be clear why we want this extra information.

We wish to find $v_1v_2$ and $w$ representing a shortest odd cycle of length $\geq 7$ in $H'$ as per Lemma \ref{lem:cycles}.  So for each edge $v_1v_2$ we search for a third vertex $w$ with the following properties:
\begin{itemize}
\item $\dist(v_1,w)=\dist(v_2,w)\geq 3$.
\item $|\next(v_1,w) \cup \next(v_2,w)| \geq 2$.
\item For our choice of $v_1v_2$ and subject to the first two requirements, $w$ minimizes $\dist(v_1,w)$.
\end{itemize}
Suppose our choice of $v_1v_2$ minimizes $\dist(v_1,w)$ over all possible choices of $v_1v_2$.  We claim that the union of any internally vertex disjoint shortest $v_1$-$w$ path $P_1$ and shortest $v_2$-$w$ path $P_2$, along with the edge $v_1v_2$, is a shortest odd cycle of length $\geq 7$; we call it $C(v_1,v_2,w)$.  By Lemma \ref{lem:cycles}, it suffices to prove that $P_1$ and $P_2$ actually exist, so choose $P_1$ and $P_2$ so they each intersect a distinct vertex of $\next(v_1,w)\cup \next(v_2,w)$.  Then the paths $P_1$ and $P_2$ must be internally vertex disjoint by our choice of $w$ and the fact that $\len(P_1)=\len(P_2)$.


Therefore to find a shortest odd cycle of length $\geq 7$ in $H'$, we must first find this optimal $v_1v_2$ and $w$.  Given our all-pairs shortest path matrix, this takes $O(|V(H')|\cdot|E(H')|)$ time, i.e.\ constant time for each choice of $v_1v_2$ and $w$.  Once we find $v_1v_2$ and $w$, we can find $C(v_1,v_2,w)$ in $O(|V(H')|)$ time by taking shortest paths from $v_1$ to $w$ and $v_2$ to $w$ that do not intersect at the first vertex.

We have now dealt with every required case in $O(m^2+n^2\log n)$ time, completing the proof.
\end{proof}

\bibliographystyle{plain}

\bibliography{masterbib}
\end{document}